\newcommand{\loc}[0]{\mathit{loc}}
\newcommand{\safe}[0]{\mathit{Safe}}
\newcommand{\init}[0]{\mathit{Init}}
\newcommand{\goal}[0]{\mathit{Goal}}
\newcommand{\must}[0]{\mathit{Must}}
\newcommand{\may}[0]{\mathit{May}}
\newcommand{\wpst}[3]{ {#3}\text{-}\overline{post}^{#1}(#2,\vu)}
\newcommand{\npst}[3]{ {#3}\text{-}\underline{post}^{#1}(#2,\vu)}
\newcommand{\pst}[2]{post^{#1}(#2,\vu)}
\newcommand{\synprob}[0]{\mathit{RAC}}
\newcommand{\CPar}[0]{controller\ table}
\newcommand{\solveorigin}[1]{$#1 \models \Pi(M,\C,\V)$}
\newcommand{\solveover}[1]{$#1 \models \Pi^s_\P(M,\C,\V)$}
\newcommand{\solveunder}[1]{$#1 \models \Pi^w_\P(M,\C,\V)$}
\newcommand{\solveoriginprime}[2]{$#1 \models \Pi(M#2,\C,\V)$}
\newcounter{Theorem}
\newtheorem{definition}[Theorem]{Definition}
\newtheorem{prop}[Theorem]{Proposition}
\newtheorem{lemma}[Theorem]{Lemma}
\newtheorem{theorem}[Theorem]{Theorem}
\newcommand{\final}[1]{{ #1 }}
\newcommand{\num}[1]{\relax\ifmmode \mathbb #1\else $\mathbb #1$\fi}
\newcommand{\reals}{{\num R}}                    %reals
\newcommand{\naturals}{{\num N}}                      %natural numbers
\newcommand{\vu}{{\bf u}}
\def\B{{\cal B}} % HA
\def\C{{\cal C}} % HA
\def\F{{\cal F}} % HA
\def\I{{\cal I}} % environment sequence
\def\M{{\cal M}} % Mode switching transitions
\def\N{{\cal N}}
\def\P{{\cal P}} % set of modes
\def\V{{\cal V}} % Lyapunov function
\def\U{{\cal U}} % set of trajectories
\def\X{{\cal X}} % Lyapunov function
\begin{document}

\title{\LARGE \bf
Controller Synthesis with Inductive Proofs for Piecewise Linear Systems: an SMT-based Algorithm
}

\author{%
Zhenqi Huang, Yu Wang, Sayan Mitra, Geir E. Dullerud and  Swarat Chaudhuri
\thanks{%
Zhenqi Huang, Yu Wang, Sayan Mitra and Geir E. Dullerud are with Coordinated Science Laboratory, University of Illinois at Urbana-Champaign, Urbana, IL 61801, USA
{\tt\small \{zhuang25, yuwang8, mitras, dullerud\}@illinois.edu}}
\thanks{Swarat Chaudhuri is with the Department of Computer Science,  Rice University, Houston, TX 77005, USA
{\tt\small swarat@rice.edu}}
}

\maketitle
\thispagestyle{empty}
\pagestyle{empty}

\begin{abstract}
We present a  controller synthesis algorithm for reach-avoid problems for piecewise linear discrete-time systems. Our algorithm relies  on SMT solvers and in this paper we focus on piecewise constant control strategies. Our algorithm generates feedback control laws together with inductive proofs of unbounded time safety and progress properties with respect to the reach-avoid sets. Under a reasonable robustness assumption, the algorithm is shown to be complete. That is, it either generates a controller of the above type along with a proof of correctness, or it establishes the  impossibility of the existence of such controllers. To achieve this, the algorithm iteratively attempts to solve a weakened and strengthened versions of the SMT encoding of the reach-avoid problem. We present preliminary experimental results on applying this algorithm based on a prototype implementation. 
\end{abstract}

% !TEX root = 0main.tex
\section{Introduction}
\label{sec:intro}
A Satisfiability Modulo Theory (SMT) problem is a classical decision problem in computer science~\cite{CDP97}. It takes as input a logical formula in first-order logic that can involve combinations of background theories, and requires one to decide whether or not the formula has a satisfying solution. 
For a bounded time horizon $k$, a simplest SMT problem in Equation~(\ref{eq:exsmt}), for instance, is an encoding of a  search for a sequence of control inputs vectors $u_1, \ldots, u_k$ that drives a discrete time linear open-loop control system from every initial state in the hypercube  $[0,0.1]^n$ to the hypercube $[0.9,1]^n$ in $k$ steps, while always keeping the state inside the hypercube  $[0,1]^n$.
\begin{align}
& \exists \ u_1, \ldots u_k, \forall \ x_0 \in [0,0.1]^n, \forall \ t \in \{1,\ldots,k-1\},  \nonumber \\
& \ \ \mbox{let} \ x_{t+1} = A x_t + B u_t \nonumber \\ 
& \ \ \mbox{such that} \ x_t \in [0,1]^n \ \mbox{and} \  x_k \in [0.9,1]^n. \label{eq:exsmt}
\end{align}
This example has several constraints that are defined in terms of the quantified variables $u_i$ and  $x_i$, numerical constants (including those in the matrices $A$ and $B$), and the background theory of linear real arithmetic. 
An SMT solver is a software tool that solves SMT problems by  either giving an assignment to the variables that satisfy all the constraints or by saying that none exists. 
Modern SMT solvers routinely handle linear problems with thousands of variables and millions of constraints, so much as they have become the engines for innovation in verification and synthesis for computer software and hardware~\cite{de2008z3,barrett2011cvc4,dutertre2006yices}. 
Although many control systems can only be modeled by means of nonlinear arithmetic over the real numbers involving transcendental functions that make the corresponding SMT problems undecidable, the solvers are evolving rapidly and several incorporate approximate decision procedures for nonlinear arithmetic~\cite{gao_dreal:_2013}.
These technological developments motivated us (and others~\cite{saha_automated_2014,nedunuri_smt-based_2014}) to explore SMT-based controller synthesis. 

In this paper, we present an algorithm that uses SMT solvers for synthesizing controllers for discrete time systems. The dynamics of the system is given as a piecewise linear feedback control system.
% SM2ZH: is this clear/make sure this sentence is not misleading 
%
The control requirements are the standard {\em reach-avoid\/} specification~\cite{cardenas2008research,cardenas2009challenges}: a set of states $\goal$ that has to be reached while always staying inside a  $\safe$ set.
% Is that the best/oldest reference for reach avoid problems?

A key  difficulty in using SMT for synthesis, is that the resulting SMT problem has to encode the unrolled dynamics of the system with the unknown controller inputs. In the above simple example, this gave rise to $k$ control input variables and the intermediate states. For more general nonlinear models, the intermediate states cannot be written down in closed form and one has to unroll the over-approximations of the dynamics. This can then lead to overtly conservative answers from the solver.  
We present a technique that avoids this problem by synthesizing the control law together with an inductive proof of its correctness. The proof has two parts: (a) an inductive invariant that implies safety and (b) a ranking function that implies progress. A positive side-effect of this is that it can not only synthesize controllers with understandable correctness proofs, but it can also establish the nonexistence of provably correct controllers (of a certain template). 

% give some more details on this below.
%The rest of the paper is organized in the following way.
In Section~\ref{sec:prelim} we define the system model,  the reach-avoid synthesis problem and a particular notion of robustness of models.
In Section~\ref{sec:rules} we first present a basic SMT encoding of the synthesis problem and then a strengthened and a weakened version this encoding. 
Using these two encodings, in Section~\ref{sec:algorithm} we present the synthesis algorithm, its soundness and relative completeness. 
In Section~\ref{sec:case} we illustrate an application of the algorithm in a vehicle navigation problem and conclude in Section~\ref{sec:conc}.

% !TEX root = 0main.tex
\section{Related Work}
\label{sec:related}
% Roughly in decresing order of relevance.

Researchers have recently used SMT solvers for synthesizing programs and strategies in games. 
% cite the Seshia Iros paper/ Complan tool
The approach in~\cite{saha_automated_2014} uses SMT to find controllers for general linear temporal logic (LTL) specifications 
by stitching together motion primitives from a library. Unlike our encoding with inductive proofs, the approach of~\cite{saha_automated_2014} involves bounded unrolling of the dynamics.

In~\cite{nedunuri_smt-based_2014},
% SM2ZH: this paper does not have a publication venue or year
the authors used SMT solvers to synthesize integrated task and motion plans by constructing a placement graph. In~\cite{beyene2014constraint}, a constraint-based approach was developed to solve games on infinite graphs between the system and the adversary. Our work can be seen as  introducing  control theoretic constraints to the SMT formulation. 

% Richar murray's work?

% Tomlin's
The authors of~\cite{zhou2012general,ding2011reachability} proposed a game theoretical approach to synthesize controller for the reach-avoid problem, first for continuous and later for switched systems. In these approaches, the reach set of the system is computed by solving a non-linear Hamilton-Jacobi-Isaacs PDE. 
Our methodology, instead of formulating a general optimization problem for which the solution may not be easily computable,
solves a special case exactly and efficiently.
With this building block, we are able to solve more general problems through abstraction and refinement.

\final{
Model predictive control (MPC) can as well be used to solve the reach-avoid problem~\cite{bemporad2002model,turri2013linear}.
In each cycle of an MPC, the optimal input for
reaching the goal while avoiding the obstacle, is computed for a fixed prediction horizon. Then, the first part of the optimal control input is applied, and a new input is computed from the new state, and so on. 
As the prediction horizon increases, the applied input converges to the optimal reach-avoid input. 
In the contrast, our approach can be used to synthesis controls for possibly unbounded horizon with safety and progress guarantees and can establish nonexistence of controller of certain type.
}

% put all of LTL synthesis in one para
There is a large body of results on automata theoretic approaches for controller synthesis~\cite{svorenova_temporal_2014,aydin_gol_finite_2014,kress-gazit_temporal-logic-based_2009,wongpiromsarn_receding_2012,liu_abstraction_2014}. The approach here is to construct a finite abstraction of the dynamical system and then invoke the LTL synthesis algorithms such as the one in~\cite{Bloem2012911}. This approach has been applied to several systems and several software tools for synthesis have been implemented~\cite{chen2006,mazo2010pessoa}.

% not sure what to say about these.
The authors of~\cite{lahijanian_temporal_2012,svorenova_temporal_2015} 
build Markov decision trees to synthesize control policies with maximum probability 
of satisfying the specifications. Our method is very different since we consider deterministic systems and try to synthesize controller that are guaranteed to satisfy the specifications.

%Similar frameworks were adopted in~\cite{Fawzi2014} where the authors proposed an effective algorithm to estimate the system states and designed feedback controllers to stabilize the system under adversaries, and in~\cite{shoukry2013minimax} where a optimal controller is designed for a distributed control system with communication delays.
% SM2ZH: Seems unrelated at this point.

% !TEX root = 0main.tex
\section{preliminaries and Background}
\label{sec:prelim}

\paragraph*{Sets and Functions} 
%For a vector $v$ in Euclidean space $\reals^n$, $|v|$ denotes the standard {\em $\infty$ norm} of $v$.
For a natural number $N$, $[N]$ denotes the set $\{0,1,\ldots,N-1\}$.
%For two compact sets $A,B$ of $\reals^n$, $d(A,B)$ denotes the Hausdorff distance between them.
Given two functions $f,g:A\rightarrow \reals^n$, we use $d(f,g) = |f(a)-g(a)|_\infty$ to denote the $\ell_\infty$ distance
between $f$ and $g$, where $|\cdot|_\infty$ is the standard $\infty$-norm.
%For a subset $S\subseteq \reals^n$, $\ind S$ is the {\em indicator function} of $S$,
%such that $\ind S(x)$ is True if and only if $x\in S$. 
%For a predicate $p:\reals^n\rightarrow\{0,1\}$, $\sem p = \{x\in \reals^n: p(x)=1\}$
%denotes the set of points where $p$ is evaluated $1$ or True.

We will use finite collections of sets to approximate arbitrary compact subsets in $\reals^n$.  For a finite collection $\P$ of  subsets of $\reals^n$ and a subset $S\subseteq \reals^n$, we say that $\P$ {\em preserves} $S$ if
there exists a subset $\P'\subseteq \P$ such that
(i) $\bigcup\limits_{P\in \P'} P = S$, and
(ii) $\forall \ P \in\P\backslash \P', P \cap S = \emptyset$.
In other words, $\P'$ completely and exactly represents $S$.
%$p(x) = p(x')$. 
%Sometimes we also say $\P$ preserves $\sem p$
%

A {\em finite partition} $\P$ of a compact subset $S\subseteq \reals^n$ is a finite disjoint collection of sets that exactly cover $S$. The {\em resolution\/} of  a partition $\P$ is the maximum diameter of the sets in $\P$.
For two partitions $\P,\P'$ of a compact set $S$, we say that $\P$ {\em subsumes\/} $\P'$, if for any $I\in \P$, there exists $I'\in \P'$ such that $I\subseteq I'$.

%For a subset $E \subseteq S$ and a partition $\P$ of $S$,
%partition $\P$ is said to {\em preserve} $E$ if every element of $\P$ 
%is either disjoint from or contained by $E$.

%\begin{definition}
%\label{def:clo}
%For a bounded set $S$, the {\em successive closure} of $S$, denoted $S^+$, is the minimum set such that
%for any $s'\in S^+$, there exists $s\in S$ such that
%\[
%s\preceq s'.
%\]
%\end{definition}
%The {\em precedent closure} of $S$ is defined similarly for the minimum set containing all the precessors of states in $S$, denoted $S^-$.
%
%\begin{prop}
%If $S$ is a rectangular set, then $S^+ \cap S^- = S$.
%\end{prop}
%\begin{proof}
%For any $s\in S$, following from Definition~\ref{def:clo},
%$s\in S^+ \cap S^-$. Thus $S\subseteq S^+ \cap S^-$.
%For any $s\in (S^+ \cap S^- )$.
%$\exists s',s''\in S$ such that $s'\preceq s \preceq s''$.
%Since $S$ is a rectangular set, there exists $l,u\in \X$ such 
%that $l\preceq s' \preceq s\preceq s''\preceq u$.
%It follows that $s\in S$.
%Therefore, $S\supseteq S^+ \cap S^-$
%\end{proof}

%A function $f:\reals^n\rightarrow \reals^m$  is {\em Lipschitz continuous}
%if there exists $L>0$ such that for any $x,x'\in \reals^n$,
%$||f(x)-f(x')||\leq L||x-x'||$.
 
\paragraph*{Piecewise Linear Systems, Feedback, and Robustness}
\label{sec:model}
A piecewise linear system $M$ is a tuple $(\X,\U,\loc,\I,\F)$
where 
(a) $\X\subseteq \reals^n$ is a compact set called the  {\em state space}, 
(b) $ \U \subseteq \reals^m$ is a compact set called the {\em input space}, 
(c) $\loc$ is a finite set  called the {\em set of locations}, 
(d) $\I = \{P_l\}_{l \in \loc}$ is a  partition of $\X$ and each element of $\I$ is called a  
{\em location invariant\/}, and
(e) $\F= \{f_l\}_{l \in \loc}$ is a collection of linear {\em dynamic functions}
$f_l: \X \times \U \rightarrow \X$.

The evolution of the continuous state of the system is governed by the dynamic function of the location invariant it is currently in.
For any time $t\in \naturals$, a state $x_t\in P_l$ and an input $u_t\in \U$ 
the next state of the system is given by the discrete-time dynamics: 
\begin{equation}
\label{eq:dynamics}
x_{t+1} = f_l(x_t,u_t).
\end{equation}

A general static state-feedback control law can be thought of as a function $\vu : \X\rightarrow \U$ that maps each state to an input. In many systems, sensors and controller hardware have a finite resolution, and therefore, such a general law cannot be implemented. 
In this paper, we assume that $M$ is associated with a   {\em \CPar} $\C$
which is a partition of the state space $\X$ and the  $\vu:\C \rightarrow \U$ maps 
each partition in $\C$ to an input. Essentially $\vu$ is a look-up table, which assign an input for every equivalence class defined by $\C$. 

For a feedback control policy $\vu$ and a system~\eqref{eq:dynamics},
the next state is just a function of the current state.
We denote
$post_M({x},\vu) = f_l(x,\vu(x))$ if $x\in P_l$. 
The subscript $M$ is dropped if it is clear in the context.
We denote by $\pst{t}{x}$ the state reached from $x$ after the $t^{th}$ step.
For a compact set of states $S\subseteq \X$,
we  define  $\pst{}{S} = \{x': \exists x\in S \text{ such that } x'=\pst{}{x}\}$.
The $t$ step post operation $\pst{t}{S}$ is defined similarly. 

Our synthesis algorithm will be complete for system models upto some imprecision in the model. 
For a system $M = (\X,\U,\loc,\I,\F)$ and $\epsilon>0$,
another system $M'$ is an {\em $\epsilon$-perturbation\/} of $M$ if it is identical to $M$
except that the set of dynamic functions for $M'$ is $\F'=\{f'_l\}_{l\in\loc}$, such that  
for each $l \in \loc$,  $d(f_l,f_l')\leq \epsilon$.
We denote by $\B_\epsilon(M)$ the set of all models that are $\epsilon$-perturbations of $M$.

%the set of {$\epsilon$-perturbations} of  $M$, is the set of all piecewise linear systems
%such that (i) they share the same state space $\X$, input space $\U$, location set $\loc$ and invariant set $\loc$ as $M$, and (ii) they have a set of dynamic functions $\F' = \{f'_l\}_{l\in\loc}$ such that for each location, the dynamic is within $\epsilon$ distance from $M$, i.e. $d(f_l,f_l')\leq \epsilon$.

%For general switched dynamical systems and set $S$, the post operator cannot be computed exactly.
%We denote $\opst_\vu(S,t)$ and $\upst_\vu(S,t)$ as the over- and under- approximations of the post operation.

%A state $x$ is {\em reachable} if it is the last state of some execution.
%The reachable states from a subset of initial states $\Theta\subseteq \X$ at $T$ with inputs in $Ctr\subseteq \U^T$ are denoted by $\reach{\A}(\Theta,Ctr,T)$.
\paragraph*{Reach-Avoid Control Problem ($\synprob$)}
A {\em reach-avoid control\/}($\synprob$) problem  
is parameterized by the system model $M$, the \CPar{} $\C$, and three sets of states $\init, \safe, \goal \subseteq \X$ called the {\em initial, safe} and {\em goal\/} states. 
%and a set  $Ctr \subseteq \U^T$ called the {\em controller constraints\/}. 
We will assume that these sets have some finite representation (for example, hyperrectangles, polytopes). 
%and, more importantly, that they are preserved by \CPar\/  $\C$.
% 
We define what it means to solve a $\synprob$ problem with a feedback control policy $\vu$.
\begin{definition} 
\label{sec:def}
A {\em solution\/} to a $\synprob$ is a feedback control policy $\vu:\C\rightarrow\U$ such that 
for any  initial state $x \in \init$, the states visited by the system satisfies the condition:
\begin{itemize}
\item {\em (Safety)\/} for all $t \in \naturals$, $x_t \in \safe$ and
\item {\em (Progress)\/} there exists $T\in \naturals$ such that $x_T \in \goal$.
\end{itemize} 
\end{definition}
%A $\synprob$ $\Pi$ is uniquely specified by a system model $M$, a  and specifications $\init,\safe,\goal$.\
Throughout the paper a $\synprob$ is uniquely specified by a model $M$  as the rest parameters are fixed.
%%%%
% robustness
%For $\epsilon>0$, the set of {$\epsilon$-perturbations} of  $M(\C,\init,\safe,\goal)$, is the set of all $\synprob$s that are specified by $\C,\init',\safe',\goal'$
%such that $\init',\safe',\goal'$ are within $\epsilon$ distance to $\init,\safe,\goal$ respectively measured by Hausdorff distance.
%
%\begin{definition}
%\label{def:robust}
%A $\synprob$ $M(\C,\init,\safe,\goal)$ is {\em robust} if there exists $\epsilon>0$
%such that either of the cases holds:
%\begin{enumerate}[(i)]
%\item a control $\vu$ solves all $\epsilon$-perturbations of $M$, or
%\item no control solves any of the $\epsilon$-perturbations of $M$.
%\end{enumerate}
%\end{definition}
%

%\input{method}
%% !TEX root = 0main.tex

\section{Constraint-based Synthesis}
\label{sec:rules}
A major barrier in encoding $\synprob$ as an SMT problem is that the safety and progress requirements are over unbounded time. Moreover, these requirements are stated in terms of the future reachable states of the system and computing that in and on itself is a hard problem. Instead of working with unbounded time reach sets, we address this problem by encoding a set of rules that inductively prove safety and progress of the control system.

\subsection{Inductive Synthesis Rules}
\label{ssec:indrules}

In addition to searching for the feedback control law $\vu:\C\rightarrow \U$,
the SMT problem will encode the search for 
(a) an inductive invariant $Inv \subseteq \X$ that proves safety with $\vu$, and 
(b) a ranking function $\V$ that proves progress with $\vu$. 

In order to constrain the search, we will fix a {\em template\/} for the ranking function.
For this paper, we will use the template $\C\rightarrow \naturals$, that is, any 
function that is piecewise constant on the partition of the state space $\C$. 
This choice has an easy interpretation: each entry in the \CPar{} gives the rank of the controller along with the feedback law.
Let $\V$ denote the countable set of all such functions. 
%We require the user to provide a {\em template of ranking functions} $\V\subseteq \{V:\X\rightarrow\naturals\}$ that  captures high-level intuitions about how the system progress towards the goal.
Each ranking function $V\in\V$ maps every state in $\X$ to a natural number.
For any $C\in \C$, $V(C)$ is the natural number that all the states $x \in C$ map to.
%In our case studies, we use a template of piecewise constant ranking functions 
%$\V=\{V:\C\rightarrow \naturals\}$, while our proposed method can also incorporate 
%other classes of templates such as $\ell_1$ distance to the goal set or Lyapunov-like functions.
%
Now we are ready to present the basic rules encoding inductive synthesis of $\synprob$:
%%%%%template in V
\begin{figure}[H]
\fbox{\parbox{.9\textwidth}{
\noindent
Find $\vu:\C\rightarrow \U$,  $V\in \V$, $Inv\subseteq \X$ such that: \\
\begin{tabular}{llcl}
R1:&$\init \subseteq Inv$ \\
R2:&$\pst{}{Inv} \subseteq Inv$ \\
R3:&$Inv \subseteq \safe$ \\
R4:&$C\subseteq \goal  \Longleftrightarrow  V(C)=0 $\\
R5:&$C\subseteq Inv \wedge \pst{}{C}\cap C' \neq \emptyset $\\
  & $\quad \Rightarrow V(C) \geq V(C')$\\
R6:&$C\subseteq Inv\backslash \goal  \wedge \pst{k}{C} \cap C'\neq \emptyset$ \\ 
 &  $\quad \implies V(C)>V(C')  $.
\end{tabular}
}
}
\caption{Basic rules $\Pi(M,\C,\V)$ for synthesis for $\synprob$.}
\label{fig:rules}
\end{figure} 
%
%%%% no template
%\begin{figure}[H]
%Find functions $\vu:\C\rightarrow \U$,  $V:\X\rightarrow \naturals$,   $Inv\subseteq \X$ s.t.: 
%\begin{tabular}{llcl}
%R1:& $ \init \subseteq Inv$ \\
%R2:& $post_\vu(Inv) \subseteq Inv$ \\
%R3:&	$ Inv \subseteq \safe$ \\
%R4:&	$x\in \goal  \Longleftrightarrow  V(x)=0 $\\
%R5:& $x\in Inv \implies V(x) \geq V(post_\vu(x))$\\
%R6:&	$x\in Inv\backslash \goal  \implies V(x)>V(post_\vu(x,k))  $
%\end{tabular}
%\caption{Synthesis Rules for $\synprob$}
%\label{fig:rules}
%\end{figure} 
Rules R1-R3 imply that $Inv$ is a fixed-point of $post$ with control $\vu$ that contains 
$\init$ and is contained in $\safe$, and therefore, is adequate for proving safety. 
%l set, which is the exact set of states the system can visit from $\init$ with control $\vu$. 
%R3 encodes the safety condition.
Rule R4 states the the rank of any region $C$ vanishes iff it is in $\goal$. 
Rule R5 encodes  the (Lyapunov-like) property that the rank of any region $C$ is nonincreasing along trajectories. 
Finally, rule R6 states that for any non-$\goal$ region $C$, the rank decreases with $\vu$ within $k$ steps, 
where $k$ is an induction parameter of this encoding.
For $\synprob$ specified by model $M$ with $\CPar{}$ $\C$ and a template $\V$, we denote the SMT problem  (Figure~\ref{fig:rules}) as $\Pi(M,\C,\V)$
For some control  $\vu$, ranking function $V\in \V$ and $Inv\subseteq \X$, 
we write \solveorigin{\vu,V,Inv}\footnote{For the sake of clarity, we supress the dependence on $k$.} if the Rules R1-R6 are satisfied.
% if $\vu,V,Inv$.

%We will use a bounded {\em look ahead interval} $k$ which bounds the maximum depth of search such that the ranking function has to decrease.

%To eliminate the quantifiers, 
%we create two uninterpreted functions to amend the synthesis,
%where the {\em invariant set} $Inv$ is the set of states that can be visited by the system under control $\vu$, and $V$ is a discrete type Lyapunov function.
%We thus construct a set of quantifier-free constraints which solve
%the $\synprob$ as shown in Figure~\ref{fig:rules}.

 %\begin{figure}[H]
%Find $\vu:\K\rightarrow \U$,  $V:\K\rightarrow \naturals$ and $Inv\subseteq \X$ such that: 
%\begin{tabular}{ll}
%R1:& $ \init \subseteq Inv$ \\
%R2:& $post_\vu(Inv) \subseteq Inv$ \\
%R3:&	$ Inv \subseteq \safe$ \\
%R4:&	$\bigwedge\limits_{C\in \P}( C\subseteq \goal  \Longleftrightarrow  V(C)=0 )$\\
%R5:& $\bigwedge\limits_{C,C'\in \P}(post_\vu(C)\cap C' \neq \emptyset \implies V(C) > V(C'))$\\
%R6:&	$\bigwedge\limits_{C,C'\in \P} (C\subseteq Inv\backslash \goal \implies   post_\vu(C,k))\cap C = \emptyset$.
%\end{tabular}
%\caption{ Synthesis Rules for $\synprob$}
%\label{fig:rules}
%\end{figure}
%\subsection{Soundness of the Synthesis Rules}

%The set of rules R1-R6 in Figure~\ref{fig:rules} is sound. 
%That is, it a control $\vu$ alone with its proof $V,Inv$ 
%satisfy R1-R6, then $\vu$ solves the corresponding $\synprob$.
\begin{theorem}[Soundness]
\label{thm:soundrules}
If \solveorigin{\vu,V,Inv}, then $\vu$ solves $\synprob$ specified by $M$.
\end{theorem}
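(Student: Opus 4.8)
The goal is to show that if $(\vu, V, Inv)$ satisfies R1–R6, then for every $x \in \init$ the trajectory under $\vu$ stays in $\safe$ forever and reaches $\goal$ in finite time. The plan is to prove safety and progress separately, with safety being an easy induction and progress requiring a well-founded descent argument on the ranking function.

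For safety, I would first establish by induction on $t$ that $\pst{t}{\init} \subseteq Inv$ for all $t \in \naturals$. The base case $t=0$ is R1. For the inductive step, assuming $\pst{t}{\init} \subseteq Inv$, apply $post$ to both sides and use R2 ($\pst{}{Inv} \subseteq Inv$) together with monotonicity of $post$ on sets to get $\pst{t+1}{\init} \subseteq \pst{}{Inv} \subseteq Inv$. Then R3 gives $\pst{t}{\init} \subseteq Inv \subseteq \safe$, so every reachable state from $\init$ lies in $\safe$; in particular $x_t \in \safe$ for all $t$, which is the Safety clause of Definition~\ref{sec:def}.

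For progress, fix $x \in \init$ and consider the trajectory $x = x_0, x_1, \ldots$, all of which lie in $Inv$ by the above. Each $x_t$ belongs to some cell $C_t \in \C$, and I would track the integer sequence $V(C_t)$. The key claim is that as long as $x_t \notin \goal$, the value $V(C_{t'})$ for $t' = t$ strictly decreases over a window of $k$ steps: by R6, since $C_t \subseteq Inv \setminus \goal$ (using R4 to note $x_t \notin \goal$ means $C_t \not\subseteq \goal$, hence — and here I must be slightly careful — I'd argue that the cell containing a non-goal reachable point, under the rule's quantification over cells, yields $V(C_t) > V(C_{t+k})$ where $C_{t+k}$ is the cell of $x_{t+k}$, since $\pst{k}{C_t} \cap C_{t+k} \neq \emptyset$), while R5 ensures $V$ is nonincreasing along every single step so intermediate values don't spike. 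Since $V$ takes values in $\naturals$, it cannot strictly decrease infinitely often, so there must be some $T$ with $x_T \in \goal$ — otherwise we would have an infinite strictly decreasing sequence of naturals. Then $V(C_T) = 0$ by R4, giving the Progress clause.

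The main obstacle is the mismatch between the rules, which are stated in terms of partition cells $C, C'$ and set-level $post$, and the semantics in Definition~\ref{sec:def}, which is stated pointwise along a trajectory. I need to cleanly bridge these: a single trajectory point $x_t$ determines a cell $C_t$, and I must verify the hypotheses of R5 and R6 hold for the specific pair $(C_t, C_{t+1})$ (resp.\ $(C_t, C_{t+k})$) — namely $C_t \subseteq Inv$ (needs that $C_t$, being a cell of the partition $\C$, is entirely inside $Inv$, which does not obviously follow just from $x_t \in Inv$). This suggests the intended reading is that $Inv$ is a union of $\C$-cells, or that the rules implicitly quantify so that $x_t \in C_t \subseteq Inv$ is forced; I would state this compatibility assumption explicitly (or derive it from how $Inv$ is represented in the SMT encoding) and then the rest of the argument goes through routinely. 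A secondary subtlety is handling R6 when $x_t \notin \goal$ but the cell $C_t$ straddles $\goal$ — the iff in R4 rules this out, since $V(C_t) = 0 \iff C_t \subseteq \goal$, so $x_t \notin \goal$ forces $C_t \not\subseteq \goal$ hence $V(C_t) \geq 1$, and $C_t \subseteq Inv \setminus \goal$ then follows from the partition structure.
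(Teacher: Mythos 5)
Your proposal is correct and follows essentially the same route as the paper's own proof: safety by induction using R1--R3, and progress by a well-founded descent on $V$ using R4--R6, terminating in at most $kV(C)$ steps. The cell-versus-point mismatch you flag (needing $x_t \in Inv$ to yield $C_t \subseteq Inv$, and $\goal$ to be a union of $\C$-cells so that $C_t \subseteq Inv \setminus \goal$) is a real gap that the paper's terse proof silently glosses over, and your explicit compatibility assumption is the right way to close it.
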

\begin{proof}
Let $\vu,V,Inv$ satisfy rules in Figure~\eqref{fig:rules}.
Fixing any $x\in \init$, we prove safety and progress conditions separately.

From R1, $x\in Inv$. Combined with R2, we have for any 
$t\in \naturals$, $post_\vu(x,t)\in Inv$.
Since $Inv\subseteq \safe$ (R3) we have $post_\vu(x,t)\in \safe$ for any $t$. Thus the safety condition holds.

We assume $x\in C$ such that $C\cap\goal = \emptyset$; otherwise the progress condition holds trivially.
From R4 we have $V(C)>0$. From R5 and R6,  in at most $kV(C)$ steps,
$V$ decreases to 0. By $R4$ this implies that $x$ reaches the goal.
\end{proof}

\paragraph{Robustness Modulo Templates}

In Section~\ref{sec:model} we defined perturbations of system models,
here we lift the definition to the corresponding synthesis rules:
$\Pi(M',\C',\V')$ is an {\em $\epsilon$-perturbation} of $\Pi(M,\C,\V)$ 
if (i) the $\CPar{}$ and the ranking templates are identical $\C=\C', \V=\V'$,  and 
(ii) the model $M'\in \B_\epsilon(M)$ is an $\epsilon$-perturbation of $M$.

%The synthesis rules R1-R6 are encoded with a template of ranking functions $\V$. 
%With the definition of perturbation of a $\synprob$ $\Pi$,
%we will define robustness of the $\synprob$ $\Pi(M)$ modulo the template $\V$.

\begin{definition}
\label{def:robust}
For a $\CPar{}$ $\C$  and a template of ranking functions $\V$, a $\synprob$ specified by $M$ is {\em robust} modulo $(\C,\V)$ 
if there exists $\epsilon>0$ such that either of the following holds:
\begin{enumerate}[(i)]
\item there exists a control $\vu$ and a ranking function $V\in\V$ such that for any $M'\in \B_\epsilon(M)$, 
\solveoriginprime{\vu,V,Inv}{'} with some $Inv\subseteq \X$, or
\item for none of $M'\in \B_\epsilon(M)$, the synthesis problem $\Pi(M',\C,\V)$ is satisfiable.
\end{enumerate}
\end{definition}
In Theorem~\ref{thm:main} we will show that  our synthesis algorithm is also relatively complete with respect to this notion of robustness.

%\begin{theorem}
%If there exists a $\vu$ solves $\synprob$, then there exists sufficiently large $k$ and $V,Inv$
%satisfy rules in Figure~\eqref{fig:rules}.
%\end{theorem}
%\begin{proof}
%\end{proof}

%\subsection{Template-based Construction of $Inv,V$}
%\label{sec:template}
%
%The rules in Figure~\ref{fig:rules} are
%practically solved with help from user-provided templates that capture high-level intuitions about the control.
%
%In this work, the functions $Inv,V$ are interpreted as piece-wise constant functions.
%require the user to provide a finite partition $\P$ of  the state space 
%which preserves $\init,\safe,\goal$.
%Then the partition $\P$  specifies a class of $\vu,Inv,V$ to search for:
%
%Roughly, the system is oblivious of states in the same cell $C\in \P$.
%Thus, the Lyapunov function as a
%function maps each cell to an integer.
%Similarly, the invariant is captured by a function $Inv:\P\rightarrow \{0,1\}$ such that 
%a cell returns 1 iff it is contained by $Inv$.
%In the next section, we rewrite Figure~\ref{fig:rules} with this template.

\subsection{Weakened and Strengthened Rules}
\label{ssec:relaxed} 

The main challenge in solving $\Pi(M,\C,\V)$ is the $post$ operator in Rules R2, R5, and R6. 
We need a reasonable representation of $post$ for this computation to be effective. 
In this work,  we use a finite partition $\P$ of the state space (which preserves $\C,\init,\safe,\goal$)
for computing the $post$. This choices is somewhat independent of the rest of the methodology
and any other template (for example, linear functions, support functions, zonotopes) could be used instead of the fixed partitions. 

%Then set $Inv$ is interpreted as a finite union of cells in $\P$.

%The invariant $Inv$ are practically solved with help from user-provided templates. \\

%For a cell $P\in \P$, the post operator of $P$ ($post_\vu(P)$)  
%may not be able to compute precisely. 
%Moreover, $post_\vu(P)$ may not be preserved by $\P$. 
%These two factors  prevent precise encoding the rules in Figure~\ref{fig:rules}.
The key idea to solve it is to create  {\em over} and  {\em under} approximations of the $post$ operator with respect to the representation of choice---in this case representation using the fixed partition $\P$.
These operators are then used to create weakened and strengthened versions of the basic inductive rules that can be effectively solved as SMT problems. 

We define an over-approximation ($\P\text{-}\overline{post}$) and an under-approximation  ($\P\text{-}\underline{post}$) of the $post$ operator with respect to a partition $\P$ as follows: for any compact $S\subseteq \X$,
\begin{eqnarray}
\label{eq:wpst}
\wpst{}{S}\P &=& \bigcup\limits_{P\in \P \wedge P\cap \pst{}{S} \neq \emptyset}P,\\
\label{eq:npst}
\npst{}{S}\P &=& \bigcup\limits_{P\in \P \wedge P\subseteq \pst{}{S}}P.
\end{eqnarray}
Roughly, the over-approximation  $\wpst{}{S}\P$ computes the minimum superset of $S$ which is preserved by $\P$ and the under-approximation  $\npst{}{S}\P$ computes the maximum subset of $S$ which is preserved by $\P$.  
%The $\wpst{}{S}\P$ and $\npst{}{S}\P$ are discrete operators which are assumed to be computed precisely.
We define $\wpst{t}{S}\P$ and $\npst{t}{S}\P$ as the $t$ step over and under-approximations in the usual way.

%\mitras{will resume here.}
\begin{prop}
\label{prop:complete}
For any measurable $S\subseteq \X$, a $post$ operator and any partition $\P$, the following properties hold:
\begin{enumerate}[(i)]
\item $\npst{}{S}\P \subseteq \pst{}{S} \subseteq \wpst{}{S}\P$, 

\item If $\P'$ subsumes $\P$, then $\npst{}{S}{\P'}\supseteq \npst{}{S}{\P}$ and $\wpst{}{S}{\P'}\subseteq \wpst{}{S}{\P}$, and

\item For any $\epsilon>0$,  
$\exists \ \delta>0$ such that for any $\P$ with resolution less than $\delta$,
$d(\npst{}{S}\P,\wpst{}{S}\P)<\epsilon$.
\end{enumerate}
\end{prop}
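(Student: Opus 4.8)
The plan is to take the three parts in order: (i) and (ii) are short unwindings of the definitions and feed into (iii), which carries the real content. Throughout write $T=\pst{}{S}$, and recall that $\P$ is a finite partition of the compact state space $\X$, so $T\subseteq\X=\bigcup_{P\in\P}P$.

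For (i): $\npst{}{S}\P$ is a union of cells each contained in $T$, so $\npst{}{S}\P\subseteq T$; conversely, given $y\in T$, the cell $P\in\P$ containing $y$ meets $T$ and so is one of the cells in the union defining $\wpst{}{S}\P$, whence $y\in\wpst{}{S}\P$. For (ii), the first step is to note that ``$\P'$ subsumes $\P$'' means exactly ``$\P'$ refines $\P$'': if $P\in\P$ and $y\in P$, the unique $P'\in\P'$ containing $y$ satisfies $P'\subseteq Q$ for some $Q\in\P$, and since $y\in P'\subseteq Q$ and $\P$ is a partition we must have $Q=P$; hence each cell of $\P$ is a union of cells of $\P'$. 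Both monotonicity claims then follow by pushing unions through: if $P\in\P$ and $P\subseteq T$, every $\P'$-subcell of $P$ also lies in $T$, so $P\subseteq\npst{}{S}{\P'}$ and therefore $\npst{}{S}{\P}\subseteq\npst{}{S}{\P'}$; dually, if $P'\in\P'$ meets $T$ then so does the $\P$-cell $P\supseteq P'$, so $P'\subseteq P\subseteq\wpst{}{S}{\P}$ and therefore $\wpst{}{S}{\P'}\subseteq\wpst{}{S}{\P}$.

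For (iii): by (i) we have $\npst{}{S}\P\subseteq T\subseteq\wpst{}{S}\P$, so the difference $\wpst{}{S}\P\setminus\npst{}{S}\P$ is exactly the union of the boundary cells, those $P\in\P$ that meet $T$ but are not contained in $T$. The geometric heart of the argument is that, if $\P$ has resolution $<\delta$, then every such cell lies within $\ell_\infty$-distance $\delta$ of $\partial T$: a boundary cell $P$ has diameter $<\delta$ and, being connected and meeting both $T$ and its complement, must meet $\partial T$, so $P\subseteq N_\delta(\partial T)$, the $\delta$-neighbourhood of $\partial T$. Hence $\wpst{}{S}\P\subseteq\npst{}{S}\P\cup N_\delta(\partial T)$. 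Since $T\subseteq\X$ is bounded, $\partial T$ is compact, and the sets $N_\delta(\partial T)$ decrease to $\partial T$ as $\delta\downarrow 0$; choosing $\delta$ small enough therefore confines the difference $\wpst{}{S}\P\setminus\npst{}{S}\P$ to an arbitrarily small neighbourhood of $\partial T$, which gives $d(\npst{}{S}\P,\wpst{}{S}\P)<\epsilon$.

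The step I expect to be the main obstacle is making this last passage quantitative uniformly in $S$: when $\pst{}{S}$ has lower-dimensional (``thin'') pieces, $\partial T$ need not be negligible, $\npst{}{S}\P$ can be empty or can miss whole arms of $\wpst{}{S}\P$, and then the set-distance does not shrink to $0$. In the setting of this paper $\pst{}{S}$ is a finite union of polytopes, and under the mild regularity that these are full-dimensional (equivalently, $\pst{}{S}$ equals the closure of its interior) the argument above closes cleanly; I would either make this hypothesis explicit, or, reading $d$ on sets as the Lebesgue measure of the symmetric difference, note that then $d(\npst{}{S}\P,\wpst{}{S}\P)\le|N_\delta(\partial T)|\to|\partial T|=0$, which needs only that $\partial T$ be a null set --- again automatic in the finite-union-of-polytopes case.
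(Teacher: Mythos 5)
The paper states Proposition~\ref{prop:complete} without any proof, so there is nothing to compare your argument against; judged on its own terms, your treatment of (i) and (ii) is correct and is exactly the routine unwinding of the definitions one would expect (your reading of ``$\P'$ subsumes $\P$'' as ``$\P'$ refines $\P$'' matches the paper's definition, and the observation that each cell of $\P$ is then a disjoint union of cells of $\P'$ is the right pivot for both monotonicity claims).

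On (iii), your diagnosis is the genuinely valuable part, and it is correct: the paper defines $d$ only for functions into $\reals^n$, never for sets, and under any reasonable reading (Hausdorff distance or measure of the symmetric difference) the claim is false for arbitrary measurable $S$ --- if $\pst{}{S}$ is lower-dimensional then $\npst{}{S}{\P}$ is empty for every $\P$ while $\wpst{}{S}{\P}$ is not, and no refinement closes the gap. So the obstacle you flag is a defect of the statement, not of your proof, and both of your remedies (requiring $\pst{}{S}$ to equal the closure of its interior, which holds for the full-dimensional polytopic images relevant here, or reading $d$ measure-theoretically and using that $\partial\,\pst{}{S}$ is null) are the right ways to repair it; note that for the Hausdorff reading you also need the compactness argument making the inner radius uniform near $\partial\,\pst{}{S}$, which your sketch leaves implicit. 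Two small points: the paper's definition of a partition does not require cells to be connected, so ``a boundary cell must meet $\partial T$'' needs the following patch --- pick $y\in P\cap T$ and $z\in P\setminus T$, let $w\in\partial T$ lie on the segment $[y,z]$, and observe that for any $x\in P$ the convex $\ell_\infty$-ball of radius $\delta$ about $x$ contains $y$ and $z$ and hence $w$, giving $P\subseteq N_\delta(\partial T)$ without connectedness. Everything else goes through as you wrote it.
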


Instead of searching for an exact inductive invariant $Inv$, the weakened and strengthened versions of the synthesis rules presented below try to find under ($Must$) and over-approximations ($May$) of the invariant using the $\wpst{}{S}\P$ and $\npst{}{S}\P$ operators. 
\begin{figure}[ht]
\fbox{\parbox{0.9\textwidth}{

Find $\vu:\C\rightarrow \U$,  $V\in \V$, $\must\subseteq \X$ such that   \\
\begin{tabular}{ll}
%R0:& $ Must \subseteq \may$\\
W1:& $ \init \subseteq \must $\\
W2:& $ \npst{}{\must}\P \subseteq \must $\\
W3:&	$ \must\subseteq \safe$ \\
W4:&	$C\subseteq \goal  \Longleftrightarrow  V(C)=0 $\\
W5:& $C\subseteq \must \wedge C'\subseteq \npst{}{C}\P $\\
& $\ \implies V(C) \geq V(C')$\\
W6:&	$C\subseteq \must\backslash \goal  \wedge C'\subseteq \npst{k}{C}\P$\\
  	&  $\ \implies V(C)>V(C') $
\end{tabular}
}}
\caption{ Weakened rules $\Pi_\P^w(M,\C,\V)$ for synthesis.}
\label{fig:underrules}
\end{figure}
\begin{figure}[ht]
\fbox{\parbox{.9\textwidth}{
Find $\vu:\C\rightarrow \U$,  $V\in \V$, $\may\subseteq \X$ s.t.: \\
\begin{tabular}{ll}
%R0:& $ Must \subseteq \may$\\
S1:& $ \init \subseteq \may $\\
S2:& $ \wpst{}{\may}\P \subseteq \may $\\
S3:&	$ \may\subseteq \safe$ \\
S4:&	$C\subseteq \goal  \Longleftrightarrow  V(C)=0 $\\
S5:& $C\subseteq \may \wedge C'\subseteq \wpst{}{C}\P $\\
& $\ \implies V(C) \geq V(C')$\\
S6:&	$C\subseteq \may\backslash \goal \wedge  C'\subseteq \wpst{k}{C}\P$\\
  	&  $\ \implies V(C)>V(C') $
\end{tabular}
}}
\caption{Strengthened rules $\Pi_\P^s(M,\C,\V)$ for synthesis.}
\label{fig:overrules}
\end{figure}

\begin{lemma}
\label{lem:sound}
For any $\P$, the  following hold:
\begin{enumerate}[(i)] 
\item if \solveorigin{\vu,V,Inv}, then there exist $\must\subseteq \X$ such that \solveunder{\vu,V,\must}; 
and 
\item if \solveover{\vu,V,\may}, then exists $Inv\subseteq \X$ such that \solveorigin{\vu,V,Inv}.
\end{enumerate}
\end{lemma}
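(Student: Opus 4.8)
The plan is to prove each direction by constructing the required approximate invariant from the exact one (for (i)) or the exact one from the over-approximation (for (ii)), and then verifying the six rules one by one, using Proposition~\ref{prop:complete}(i) as the main workhorse. Throughout, observe that $\C$, $\init$, $\safe$, $\goal$ are all preserved by $\P$, so they are exactly unions of $\P$-cells; this is what lets us pass between $\pst{}{\cdot}$ and its $\P$-approximations without losing the set-containment relations that R1--R6 assert.

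For part (i), suppose $\vu, V, Inv$ satisfy R1--R6. The natural candidate is $\must := Inv$ itself, but $Inv$ need not be a union of $\P$-cells, so instead I would take $\must$ to be the largest union of $\P$-cells contained in $Inv$ (equivalently, $\must = \npst{0}{Inv}\P$ in spirit, or just $\bigcup\{P \in \P : P \subseteq Inv\}$). Then: W3 is immediate since $\must \subseteq Inv \subseteq \safe$; W1 needs $\init \subseteq \must$, which holds because $\init \subseteq Inv$ (R1) and $\init$ is a union of $\P$-cells, so every cell of $\init$ lies inside $Inv$ hence inside $\must$. For W2, by Proposition~\ref{prop:complete}(i) we have $\npst{}{\must}\P \subseteq \pst{}{\must} \subseteq \pst{}{Inv} \subseteq Inv$ (using R2 and monotonicity of $post$), and since the left side is a union of $\P$-cells it is therefore contained in $\must$. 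W4 is identical to R4. For W5 and W6, whenever $C \subseteq \must$ we have $C \subseteq Inv$, and whenever $C' \subseteq \npst{(k)}{C}\P \subseteq \pst{(k)}{C}$ the hypothesis $\pst{(k)}{C} \cap C' \neq \emptyset$ of R5/R6 is satisfied (it is in fact $C' \subseteq \pst{(k)}{C}$, which is stronger), so the rank inequalities transfer directly.

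For part (ii), suppose $\vu, V, \may$ satisfy S1--S6. Here I would set $Inv := \may$. Then R1 $=$ S1 and R3 $=$ S3 directly. For R2, Proposition~\ref{prop:complete}(i) gives $\pst{}{\may} \subseteq \wpst{}{\may}\P \subseteq \may$ by S2, which is exactly $\pst{}{Inv} \subseteq Inv$. R4 $=$ S4. For R5, if $C \subseteq Inv = \may$ and $\pst{}{C} \cap C' \neq \emptyset$, then since $C' \in \P$ and $\pst{}{C} \cap C' \neq \emptyset$ we get $C' \subseteq \wpst{}{C}\P$ by definition of the over-approximation, so S5 applies and yields $V(C) \geq V(C')$; R6 is handled the same way with the $k$-step operator, using the fact that $\wpst{(k)}{C}\P$ contains every $\P$-cell that meets $\pst{(k)}{C}$.

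The only mild subtlety — the step I expect to need the most care — is the bookkeeping around the iterated operators in R6/W6/S6: one must check that $\npst{(k)}{\cdot}\P$ and $\wpst{(k)}{\cdot}\P$ interleave correctly with $\pst{(k)}{\cdot}$, i.e. that $\npst{(k)}{S}\P \subseteq \pst{(k)}{S} \subseteq \wpst{(k)}{S}\P$, which follows by iterating Proposition~\ref{prop:complete}(i) together with the monotonicity of $post$ in its set argument (each intermediate approximation feeds into the next, and monotonicity prevents the errors from pointing the wrong way). Everything else is a routine substitution of one set for another inside the already-established containments, so no genuinely new idea is required beyond the one-step sandwiching of Proposition~\ref{prop:complete}(i).
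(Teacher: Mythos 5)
Your part (i) is correct and takes essentially the paper's route: the paper obtains $\must$ as the least fixed point of the under-approximate post operator containing $\init$ (which sits inside $Inv$ because $\npst{}{\cdot}\P\subseteq\pst{}{\cdot}$), whereas you take the largest union of $\P$-cells contained in $Inv$; both choices satisfy W1--W3, and your verification of W4--W6 via the sandwich $\npst{k}{C}\P\subseteq\pst{k}{C}$ (using that $\P$ preserves $\init$, $\safe$, $\goal$, $\C$) is the same argument.

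Part (ii), however, has a genuine gap at R5/R6. You write that ``since $C'\in\P$ and $\pst{}{C}\cap C'\neq\emptyset$ we get $C'\subseteq\wpst{}{C}\P$,'' but $C'$ ranges over the controller table $\C$ (that is where $V$ is defined), not over $\P$, and $\P$ is in general a strict refinement of $\C$ --- in the paper's experiments each $\C$-cell is split into four $\P$-cells. From $\pst{}{C}\cap C'\neq\emptyset$, the definition of $\wpst{}{C}\P$ only gives that those $\P$-subcells of $C'$ which actually meet $\pst{}{C}$ are contained in $\wpst{}{C}\P$; it does not give $C'\subseteq\wpst{}{C}\P$. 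So the antecedent of S5 need not hold, and the rank inequality $V(C)\geq V(C')$ does not transfer to R5. Your step is valid exactly when $C'$ is a single $\P$-cell (e.g.\ when $\P=\C$), or if S5/S6 are read with the weaker antecedent $C'\cap\wpst{}{C}\P\neq\emptyset$. Note the asymmetry with part (i): there, the containment antecedent $C'\subseteq\npst{}{C}\P\subseteq\pst{}{C}$ does imply the nonempty-intersection antecedent of R5, but here the nonempty-intersection antecedent of R5 does not imply the containment antecedent of S5. The paper's own proof of part (ii) is only the sentence that it is ``similar'' to part (i), so this is precisely the point where that claim conceals a real difficulty rather than a routine substitution; you should either restrict to the case where the rules are posed over $\P$-cells or repair the antecedent of S5/S6 before the argument closes.
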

\begin{proof}
Suppose \solveorigin{\vu,V,Inv}. We will show that there exists a $\must\subseteq \X$
satisfying the weakened rules (W1-W6).
Fix a $\vu$. From Proposition~\ref{prop:complete}, the operator $\npst{}{\cdot}\P$ is upper bounded by 
$\pst{}{\cdot}$.
Since  $\pst{}{\cdot}$ has a fixed point $Inv$,
the fixed point of
$\npst{}{S}\P$ exists.
Let $\must$ be  the fixed point defined by W1-W2 and we have $\must \subseteq Inv$.
It follows that $\must\subseteq Inv\subseteq \safe$ and W3 holds.
W4 is inherited from  R4.
For any $C\subseteq \must \wedge C'\subseteq \npst{}{S}\P(C)$,
we have $\may\subseteq Inv$ and $\npst{}{S}\P(C)\subseteq post_\vu(C)$.
From R5, therefore,  $V(C)\geq V(C')$ and W5 holds.
Similarly, $\npst{}{S}\P(C,k)\subseteq post_\vu(C,k)$, thus W6 also holds.
Therefore, \solveunder{\vu,V,\must}.

The proof of second part is similar. 
%the existence of $Inv$ such that \solveorigin{\vu,V,Inv} is similar.
\end{proof}

%  !TEX root=0main.tex
The above lemma states that the weakening and strengthening of the synthesis rules are sound. With the additional robustness condition, we can show that either the former is unsatisfiable or the latter is satisfiable. 

\begin{lemma}
\label{lem:complete}
If a $\synprob$ specified by $M$ is robust modulo $\C,\V$, then there exists a sufficiently fine partition $\P$ such that either 
\begin{inparaenum}[(i)]
\item $\Pi^w_\P(M,\C,\V)$ is unsatisfiable or 
\item $\Pi^s_\P(M,\C,\V)$ is satisfiable.
\end{inparaenum}
\end{lemma}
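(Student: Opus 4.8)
The plan is to use the robustness hypothesis in a case analysis, and in each case use part (iii) of Proposition~\ref{prop:complete} to pick a partition $\P$ fine enough that the gap between $\npst{}{\cdot}\P$ and $\wpst{}{\cdot}\P$ is smaller than the robustness margin $\epsilon$. Let $\epsilon>0$ be the constant from Definition~\ref{def:robust}. I would first set up a dictionary between partition refinement and model perturbation: if $\P$ is a partition fine enough that $d(\npst{}{S}\P,\wpst{}{S}\P)<\epsilon$ for the relevant sets $S$, then the over-approximate operator $\wpst{}{\cdot}\P$ applied to the true dynamics of $M$ is sandwiched by the exact $post$ operator of some $\epsilon$-perturbation $M'\in\B_\epsilon(M)$, and likewise $\npst{}{\cdot}\P$ is the exact $post$ of another $\epsilon$-perturbation $M''$. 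This is the technical heart and I expect it to be the main obstacle: one has to check that the ``rounding to the partition'' performed by $\overline{post}$ and $\underline{post}$ can be realized (up to the $\P$-resolution, hence up to $\epsilon$) by a genuine linear perturbation of each $f_l$, respecting the piecewise structure and the fixed $\C$; the diameter bound on the partition cells together with Proposition~\ref{prop:complete}(iii) is what makes this quantitatively work.

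Granting that dictionary, the argument runs as follows. \emph{Case (i) of robustness:} there is a control $\vu$ and $V\in\V$ solving $\Pi(M',\C,\V)$ for \emph{every} $M'\in\B_\epsilon(M)$. Pick $\P$ fine enough (using Proposition~\ref{prop:complete}(iii)) that $\wpst{}{\cdot}\P$ on $M$ is dominated by $post_{M'}$ for some $M'\in\B_\epsilon(M)$. Since $\vu,V$ solve $\Pi(M',\C,\V)$, by Lemma~\ref{lem:sound}(i) applied to $M'$ there is a $\must$ satisfying the weakened rules for $M'$; but the strengthened rules for $M$ with this $\P$ are, by the sandwiching, no stronger than the exact rules for $M'$, so $\vu,V$ (with an appropriate $\may$ built as the $\wpst{}{\cdot}\P$-fixed point, exactly as in the proof of Lemma~\ref{lem:sound}(ii) read backwards) satisfy $\Pi^s_\P(M,\C,\V)$. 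Hence alternative (ii) of the lemma holds.

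\emph{Case (ii) of robustness:} $\Pi(M',\C,\V)$ is unsatisfiable for every $M'\in\B_\epsilon(M)$. Pick $\P$ fine enough that $\npst{}{\cdot}\P$ on $M$ coincides with $post_{M''}$ for some $M''\in\B_\epsilon(M)$ (again up to the $\P$-resolution). If $\Pi^w_\P(M,\C,\V)$ were satisfiable, say by $\vu,V,\must$, then reading the construction in the proof of Lemma~\ref{lem:sound}(i) the same $\vu,V$ together with the exact invariant $Inv=\must$ (which is a genuine $post_{M''}$-fixed point between $\init$ and $\safe$) would satisfy $\Pi(M'',\C,\V)$, contradicting unsatisfiability. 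Hence $\Pi^w_\P(M,\C,\V)$ is unsatisfiable, i.e.\ alternative (i) of the lemma holds. Finally I would remark that the partition can be taken to simultaneously preserve $\C,\init,\safe,\goal$ (any common refinement of the cells of those sets and of a sufficiently fine grid does the job), so the constructed $\P$ is admissible for both encodings, completing the proof.
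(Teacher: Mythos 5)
Your overall strategy coincides with the paper's: split on the two cases of Definition~\ref{def:robust}, use Proposition~\ref{prop:complete}(iii) to choose $\P$ with resolution below the robustness margin $\epsilon$, and translate between the approximate operators and exact $post$ operators of $\epsilon$-perturbed models. Your Case~(ii) is essentially the paper's argument: the paper realizes $\npst{}{C}\P$ exactly as $post_{M'}$ for a single perturbation $M'$ by setting $post_{M'}(x,\vu)=Proj(\pst{}{x},\npst{}{C}\P)$, so that a satisfying assignment of $\Pi^w_\P(M,\C,\V)$ becomes a satisfying assignment of $\Pi(M',\C,\V)$, contradicting unsatisfiability of all perturbations. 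That half of your ``dictionary'' is sound because the under-approximation is a \emph{subset} of the true image, and a pointwise projection moving each image by at most $\epsilon$ suffices.

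The gap is in Case~(i), precisely at the step you flag as the technical heart and then assume: that $\wpst{}{\cdot}\P$ is \emph{dominated by} $post_{M'}$ for a \emph{single} $M'\in\B_\epsilon(M)$. A perturbation in this paper is still a deterministic map sending each state to one state, whereas $\wpst{}{S}\P$ inflates $\pst{}{S}$ to a union of cells; in particular $\wpst{}{\{x\}}\P$ is an entire cell while $post_{M'}(x,\vu)$ is a point, so the containment $\wpst{}{S}\P\subseteq post_{M'}(S)$ cannot be obtained for small $S$, and even for full cells it would require an exotic (highly discontinuous, surjective-onto-the-inflation) choice of $f'_l$ that you do not construct. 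Without this domination, your derivation of S3 (via $\may\subseteq Inv_{M'}$) and of S5--S6 (via R5--R6 for $M'$) does not go through. The paper closes exactly this hole by a different device: it sets $Inv_\epsilon=\bigcup_{M'\in\B_\epsilon(M)}Inv_{M'}$, shows $Inv_\epsilon$ inherits the analogues of R3, R5, R6 from the common $\vu,V$, and then proves $\may\subseteq Inv_\epsilon$ pointwise --- for each $x\in\wpst{}{S}\P$ it finds $x'\in\pst{}{S}$ with $|x-x'|\le\epsilon$ and builds a perturbation $M''$ (depending on $x$) that reaches $x$, so that $x\in Inv_\epsilon$. That is, the over-approximation is covered by the \emph{family} of perturbations collectively, not realized by any single one. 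You would need to either carry out this union-of-invariants argument or supply a genuine construction of a single dominating perturbation; as written, Case~(i) is incomplete.
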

\begin{proof}
We discuss the two cases in Definition~\ref{def:robust}.
In this prove $post, \P-\overline{post},\P-\underline{post} $ without a subscript
denote the operator with respect to model $M$. 

Suppose there exists $\epsilon>0$ such that some controller $\vu$ and ranking function $V\in\V$ solves all $\epsilon$-perturbations of $\Pi(M,\C,\V)$. 
That is, for each $M'\in \B_\epsilon(M)$, there exists a $Inv_{M'}$,
such that $\vu,V,Inv_{M'}\models \Pi(M')$.
We define $Inv_\epsilon$ as the union of all such $Inv_{M'}$'s. 
Roughly, $Inv_\epsilon$ is the set of states that can be visited for some $\epsilon$-perturbation of $M$ with controller $\vu$. 
Since every $Inv_{M'}$ satisfies R3, R5, R6 in Figure~\ref{fig:rules},
it can be shown that (i) $Inv_\epsilon\subseteq \safe$, (ii) $C\subseteq Inv_\epsilon \wedge \pst{}{C}\cap C' \neq \emptyset \implies V(C) \geq V(C')$, and (iii) $C\subseteq Inv_\epsilon\backslash \goal  \wedge \pst{k}{C} \cap C'\neq \emptyset  \implies V(C)>V(C')  $.
Also, any subset of $Inv_\epsilon$ also satisfy the above three formula.
From Proposition~\ref{prop:complete}, for sufficiently fine partition $\P$, 
for any $S\subseteq \X$, 
$d(\wpst{}{S}\P,\pst{}{S})\leq \epsilon$. 
We will inductively prove that the $\may$ set with respect to this partition $\P$ is a subset of $Inv_\epsilon$.
(i) Initially, $\init\subseteq Inv_\epsilon$.
(ii) For any set $S\subseteq Inv_\epsilon$,
and any state $x\in\wpst{}{S}\P$, it suffice to prove $x\in Inv_\epsilon$. 
First, since $d(\wpst{}{S}\P,\pst{}{S})\leq \epsilon$.
We can find a state $x'\in \pst{}{S} \subseteq Inv_\epsilon$
such that $||x-x'||\leq \epsilon$. 
Since $x'$ is in $Inv_\epsilon$, it is reached by some model $M'\in\B_\epsilon(M)$ for the first time.
We construct a model $M''$ that is identical to $M'$ elsewhere except that
at state $x'$ the dynamics is $x' = post_{\M''}(x,\vu)$. It is easy to show that $M''$ is a $\epsilon$-perturbation of $M$ which visits $x$ with controller $\vu$.
Thus $x\in  Inv_\epsilon$. 
By (i) and (ii) above, we derive $\may\subseteq Inv_\epsilon$.
It follows that  the strengthened rules S3, S5 and S6 are satisfied.
In addition, S1-S2 is satisfied by the definition of $\must$ and S4 is just inherited from R4.
Therefore, the strengthened rules are satisfiable.

Otherwise suppose exists $\epsilon>0$ such that none of the $\epsilon$-perturbation of $\Pi(M,\C,\V)$ is satisfiable. 
Again from Proposition~\ref{prop:complete}, for sufficiently fine partition $\P$, 
for any $S\subseteq \X$, 
$d(\npst{}{S}\P,\pst{}{S})\leq \epsilon$. 
We on the contrary assume there exists some controller \solveunder{\vu,V,\must}.
We define a model $M'$ such that
for any cell $C\in \C$ and each state$x\in C$, the dynamics of $M'$ 
is captured by $post_{\M'}(x,\vu)=Proj(\pst{}{x},{\npst{}{C}\P})$. 
The operator $Proj(x,A)$ is a projection that maps $x$ to a state in $A$ that is closest to $x$.
It can be shown that $M'$ is an $\epsilon$-perturbation of $M$.
Moreover, for any cell $C\subseteq \X$, $post_{M'}{}{C} = \npst{}{C}\P$.
Thus, the problem of $\Pi(M'',\C,\V)$ and $\Pi^w_\P(M,\C,\V)$ are identical.
It follows that $\vu,V,\must \models \Pi(M',\C,\V)$, which contradicts the fact none of $\epsilon$-perturbation of $\Pi$ is satisfiable modulo $\C,\V$. 

\end{proof}

%% !TEX root = 0main.tex
\section{SMT-based Synthesis Algorithm}
\label{sec:algorithm}

%\subsection{Algorithm for Synthesis}

We introduce an algorithm for controller synthesis for $\synprob$ using the strengthened and weakened inductive SMT encodings of the previous section.
The algorithm takes as input the model $M$, 
the controller table/partition $\C$, the template for the ranking function $\V$
and the three sets $\init, \safe$ and $\goal$ that define $\synprob$ problem. 
It iteratively refines the partition $\P$ for representing invariants and makes subroutine calls to the SMT solver with the strengthened and weakened rules until it either finds a controller law $\vu$ or outputs $\bot$. 
Specifically, in each iteration, 
(a) if the strengthened problem $\Pi^s_\P$ is satisfiable then it returns the satisfying $\vu$.
(b) if the weakened problem $\Pi^w_\P$ is unsatisfiable then it returns $\bot$.
Otherwise, (c) it refines the partition $\P$ (using the $Must$ set computed from $\Pi^w_\P$).
\begin{algorithm}[h]
%\KwData{Input}
  %$Done\gets False$\;
 {\bf input:\/}  $M,  \C, \V, \init, \safe, \goal$\;
  $\P \gets initPartition$\;
  \While{True}{
  $(\mathit{val}_s,\vu)\gets \mathit{Solve(\Pi^s_\P(M,\C,\V))}$\;
  $(\mathit{val}_w,\must)\gets \mathit{Solve(\Pi^w_\P(M,\C,\V))}$\;
  %%%%%
  \uIf{$\mathit{val}_s = \mathit{SAT}$}{
  \Return $\vu$
  }
  \uElseIf{$\mathit{val}_w = \mathit{UNSAT}$}{
  \Return $\bot$
  }
  \Else{
  $\P\gets \mathit{Refine}(\P,\must)$\;
  }
  }
 \caption{SMT-based Synthesis Algorithm}
\end{algorithm}
The $\mathit{Refine}(\P,\must)$ function  
creates a finer partition of $\P$. 
For the completeness result, we require that for any $\P$, by iteratively applying
$\mathit{Refine}$, the resolution of the resulting partition
can be made arbitrarily fine.
In Section~\ref{sec:refine}, we discuss several heuristics for refinement
that potentially improve the performance of the algorithm.

\subsection{Soundness and Relative Completeness}

We will next sketch the arguments for the correctness of the algorithm. 
Soundness of the algorithm implies that 
whenever it outputs  $\vu$, 
(i) that $\vu$ is a control law that solves the $\synprob$ problem, 
(ii) the $\may$ set  obtained from solving $\Pi^s_\P$ in the final iteration
is an  inductive proof certificates for safety with $\vu$, and 
(iii) the $V$ is a $k$-step inductive proof certificate for progress with $\vu$. 
And, whenever the algorithm outputs $\bot$ then there 
does not exists a controller $\vu$, a ranking function $V\in \V$ and an invariant $Inv\subseteq$
such that the above (i)-(iii) holds.

In addition, we show that the algorithm is relative complete.
That is, if $\synprob$ is robust modulo $\C,\V$, then the algorithm terminates
with one of the above answers.

\begin{theorem}
\label{thm:main}
The algorithm is sound and relatively complete.
\end{theorem}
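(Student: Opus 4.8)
The plan is to establish soundness and relative completeness separately, in each case leveraging the three building blocks already assembled: Theorem~\ref{thm:soundrules} (soundness of the basic rules), Lemma~\ref{lem:sound} (soundness of the weakened/strengthened rules relative to the basic rules), and Lemma~\ref{lem:complete} (the robustness dichotomy). The algorithm is a loop that on each iteration solves $\Pi^s_\P$ and $\Pi^w_\P$ and either returns a controller, returns $\bot$, or refines $\P$; so the two properties amount to checking what each terminal branch guarantees, plus a termination argument.

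For \textbf{soundness}, I would argue by cases on which branch produces output. If the algorithm returns $\vu$, it did so because $\Pi^s_\P$ was satisfiable in some iteration, witnessed by $(\vu,V,\may)$. By Lemma~\ref{lem:sound}(ii) there is then an $Inv\subseteq\X$ with $\vu,V,Inv\models\Pi(M,\C,\V)$, and by Theorem~\ref{thm:soundrules} this $\vu$ solves $\synprob$; moreover $\may$ itself satisfies S1--S3 hence serves as the over-approximate inductive safety certificate, and $V$ is the claimed $k$-step progress certificate, giving items (i)--(iii) of the soundness statement. If instead the algorithm returns $\bot$, it did so because $\Pi^w_\P$ was unsatisfiable. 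Suppose for contradiction that some $(\vu,V,Inv)$ satisfied $\Pi(M,\C,\V)$; then Lemma~\ref{lem:sound}(i) would give a $\must$ with $\vu,V,\must\models\Pi^w_\P(M,\C,\V)$, contradicting unsatisfiability. Hence no such triple exists, which is exactly the negative soundness claim. The mild subtlety to spell out is that Lemma~\ref{lem:sound} holds \emph{for any} $\P$, so it applies to whichever partition the loop happens to be holding when it terminates.

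For \textbf{relative completeness}, assume $\synprob$ specified by $M$ is robust modulo $(\C,\V)$. By Lemma~\ref{lem:complete} there is a resolution $\delta>0$ such that any partition $\P$ finer than $\delta$ makes $\Pi^w_\P$ unsatisfiable or $\Pi^s_\P$ satisfiable. The refinement postcondition stated just above the theorem guarantees that iterating $\mathit{Refine}$ drives the resolution of $\P$ below any target, in particular below $\delta$, after finitely many steps. So it remains to observe that the loop cannot do anything other than refine before that point: in any iteration it returns on the $\Pi^s_\P$-SAT or $\Pi^w_\P$-UNSAT branch, and otherwise falls through to $\mathit{Refine}$; once the resolution drops below $\delta$, Lemma~\ref{lem:complete} forces one of the two returning branches to fire, so the loop halts. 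Combining with soundness, the returned answer is correct, which establishes the theorem.

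The \textbf{main obstacle} I anticipate is not in any single deduction but in handling the interaction between the termination argument and the refinement operator cleanly: one must make sure that ``resolution becomes arbitrarily fine under iterated $\mathit{Refine}$'' is enough, i.e.\ that finitely many refinements suffice to get \emph{uniformly} below the $\delta$ supplied by Lemma~\ref{lem:complete}, and that the use of the $\must$ set inside $\mathit{Refine}(\P,\must)$ (rather than a $\must$-agnostic refinement) does not interfere with this monotone shrinking. I would state the needed property of $\mathit{Refine}$ as a standing hypothesis (as the text already does) and keep the argument at the level of: finite-resolution-decrease $\Rightarrow$ eventually below $\delta$ $\Rightarrow$ Lemma~\ref{lem:complete} applies $\Rightarrow$ termination. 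Everything else is bookkeeping over the three cited results.
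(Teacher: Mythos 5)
Your proposal is correct and follows essentially the same route as the paper: soundness by case analysis on the two returning branches via Lemma~\ref{lem:sound} (with Theorem~\ref{thm:soundrules} closing the chain to an actual solution of $\synprob$), and relative completeness by combining Lemma~\ref{lem:complete} with the assumption that iterated $\mathit{Refine}$ makes the resolution arbitrarily fine. Your version is somewhat more explicit than the paper's (which compresses both soundness cases into a single citation of Lemma~\ref{lem:sound}), but the decomposition and the key lemmas used are identical.
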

\begin{proof}
{\em Soundness.} If the algorithm terminates and return $\vu$,
then for some partition $\P$, the SMT solver returns a satisfying solution $\vu$ with 
$\Pi_\P^s(M)$.
From Lemma~\ref{lem:sound}, $\vu$ solves the $\synprob$.
Otherwise if the algorithm terminates and returns $\bot$,
then for some partition $\P$, the SMT solver on $\Pi^w_\P(M)$ returns UNSAT.
From Lemma~\ref{lem:sound}, there is no control that solves the $\synprob$ problem modulo $\V$.
%\mitras{Have we defined the last concept? there is no control ...modulo $\V$.}

{\em Relative Completeness.}
Since $\Pi(M)$ is a robust $\synprob$ modulo $\V$,
from Lemma~\ref{lem:complete}, we know that for a sufficiently fine partition,
either $\Pi_\P^w(M)$ is  unsatisfiable or $\Pi_\P^s(M)$ is
are satisfiable. Thus the while-loop will terminate as the algorithm creates  fine enough partitions.

%Then for any $\epsilon>0$, there exists a partition $\P$ created by the algorithm, such that
%(i) the strengthened rules are not satisfiable with $\P$,  
%(ii) exists some $\vu,V,\must$ such that \solveunder{\vu,V,\must}, and
%(iii) every cell in $\P$ that does not contained by $\must$ with size less than $\epsilon$.
%Conditions (i) and (ii) above hold because otherwise the algorithm terminates.
%Condition (iii) holds because we  refine all the cells that is not in $\must$.
%We can find a partition $\P'$ with resolution $\epsilon$ by $\must$-guided refining $\P$. 
%We can show that with $\P'$, the relaxed rules are satisfiable and the strengthened rules are not.  
%From Proposition~\ref{prop:refine}, 
%$\P$ and $\P'$

\end{proof}

\subsection{Guided Refinement}
\label{sec:refine}
There are different ways in which the refinement of the partition $\P$ can be implemented without compromising the soundness and the relative completeness guarantees. 
The naive strategy of subdividing every equivalence class in $\P$, increases the size of the SMT problems quickly.  As  our algorithm solves both the weakened and strengthened versions of the problem simultaneously, we can marshall extra information in performing refinement.
For example, when  the weakened rules return a possible control $\vu$ along with 
its proof $V,\must$, even though this controller $\vu$ cannot be proven (to be safe and progress making) with the  strengthened rules, it can provide useful information for guiding the refinement.

\begin{definition}
For a partition $\P$ and a set $S$ that is preserved by $\P$, $\P'$ is a {\em $S$-guided refinement} of $\P$ if $\P'$ is derived by refining the cells of $\P$ that are in $S$.
\end{definition}

One key observation is that, a $\X\backslash \must$-guided refinement helps in generating safety proofs (S3 and W3), while a $\must$-guided refinement can improve the precesion of  progress proofs (S5-S6 and W5-W6).
The following proposition formalizes part of this intuition and states that for given a controller $\vu$, refining the cells in $\must$ does not improve the precision of the fixed-point $\must,\may$ computed by rules S1-S2 and W1-W2.
\begin{prop}
\label{prop:refine}
For any control $\vu$, any set $\init$ and any partitions $\P$,
let $\must,\may$ be the fixed point of operator $\npst{}{\cdot}\P$ and $\wpst{}{\cdot}\P$ containing $\init$. 
Let $\P'$ be a $\must$-guided refinement of $\P'$
and $\must',\may'$  be the fixed point of $\npst{}{\cdot}{\P'}$ and $\wpst{}{\cdot}{\P'}$ containing $\init$. 
Then, $\must = \must'$ and $\may = \may'$.
\end{prop}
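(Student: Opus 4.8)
The plan is to show that the fixed-point sets $\must$ and $\may$ do not change under a $\must$-guided refinement by arguing that the cells which are subdivided are exactly the cells that are \emph{entirely contained} in the respective fixed point, so subdividing them neither adds nor removes points, and — crucially — does not change the behaviour of the approximated $post$ operator \emph{on those sets}. First I would recall from the definition that $\P'$ is obtained from $\P$ by subdividing only cells $P \in \P$ with $P \subseteq \must$ (where $\must$ is the $\npst{}{\cdot}{\P}$-fixed point containing $\init$). Since $\must$ is by construction a union of cells of $\P$ (each step of $\npst{}{\cdot}{\P}$ returns a union of $\P$-cells, and $\init$ is preserved by $\P$), the refinement only splits cells inside $\must$ and leaves $\X\setminus\must$ untouched.

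The key step is a monotonicity-plus-sandwich argument. By Proposition~\ref{prop:complete}(ii), since $\P'$ subsumes $\P$, we have $\npst{}{S}{\P} \subseteq \npst{}{S}{\P'}$ and $\wpst{}{S}{\P'} \subseteq \wpst{}{S}{\P}$ for every $S$. Iterating the under-approximation operator from $\init$, this gives $\must \subseteq \must'$; and similarly, because the $\may$-iteration is \emph{decreasing} toward its least fixed point above $\init$ using the shrinking operator $\wpst{}{\cdot}{}$, one gets $\may' \subseteq \may$. For the reverse inclusions I would show that $\must$ is \emph{also} a fixed point of $\npst{}{\cdot}{\P'}$: restricted to subsets $S$ that are unions of $\P$-cells contained in $\must$, the refined operator $\npst{}{S}{\P'}$ agrees with $\npst{}{S}{\P}$ on $\X\setminus\must$ trivially (no cells changed there), and inside $\must$ any finer decomposition of $\pst{}{S}$ still has union equal to a union of $\P$-cells because $\pst{}{S} \subseteq \pst{}{\must}\subseteq\must$ and $\must$ is a union of old cells — so the extra resolution buys nothing. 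Hence $\must$ satisfies W1–W2 under $\P'$, and by minimality of $\must'$ we get $\must' \subseteq \must$, giving $\must = \must'$. The argument for $\may = \may'$ is the dual: $\may$ is a $\wpst{}{\cdot}{\P'}$-fixed point containing $\init$ because the cells outside $\may$ (which include all newly-split cells only if they also lie outside $\may$ — and one must check $\must\subseteq\may$, which follows from Proposition~\ref{prop:complete}(i) applied inductively) are unchanged, so the over-approximation of $\pst{}{\may}$ is the same union of cells; by maximality/least-fixed-point characterization $\may \subseteq \may'$, closing the loop.

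The main obstacle I anticipate is making the statement ``the refined operator agrees with the coarse operator on the relevant sets'' fully rigorous, i.e. pinning down precisely on which class of arguments $\npst{}{\cdot}{\P}$ and $\npst{}{\cdot}{\P'}$ coincide and checking that the fixed-point iteration stays within that class at every step. In particular one must verify: (a) $\must$ is genuinely a union of $\P$-cells (needs that $\init$ is preserved by $\P$, which is part of the standing assumption on $\P$, and that $\npst{}{\cdot}{\P}$ preserves the ``union of $\P$-cells'' property); (b) the containment $\must\subseteq\may$ so that the split cells are also inside $\may$; and (c) that even though $\npst{}{S}{\P'}$ can in general be strictly larger than $\npst{}{S}{\P}$ for a generic $S$, this cannot happen once $S$ is a union of $\P$-cells sitting inside $\must$, because $\pst{}{S}$ then still lies inside $\must$ and any $\P'$-cell contained in $\pst{}{S}$ is contained in some $\P$-cell which — one shows — must itself already be contained in $\pst{}{S}$. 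Establishing (c) is the crux; the rest is bookkeeping. Note there is also a harmless typo in the statement (``refinement of $\P'$'' should read ``refinement of $\P$''), which I would read in the obvious way.
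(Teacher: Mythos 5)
Your overall strategy --- one inclusion from the monotonicity in Proposition~\ref{prop:complete}(ii), the other by showing that the old fixed point is still a pre-fixed point of the refined operator and invoking minimality --- is the right one (the paper states this proposition without proof, so I am judging the argument on its own terms). However, the justification of what you yourself identify as the crux, step (c), rests on two false claims. First, $\pst{}{\must}\subseteq\must$ does not hold: W2 only guarantees $\npst{}{\must}{\P}\subseteq\must$, i.e.\ $\must$ is closed under the \emph{under-approximated} post, and the true image $\pst{}{\must}$ will in general spill into cells it does not fully cover. Second, it is not true that $\npst{}{S}{\P'}=\npst{}{S}{\P}$ for $S$ a union of $\P$-cells inside $\must$: a fragment $P'$ of a refined cell $P$ can satisfy $P'\subseteq\pst{}{S}$ while $P\not\subseteq\pst{}{S}$, so the refined operator genuinely picks up more. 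What saves the proposition is weaker and simpler: every fragment so gained is a subset of a refined cell and hence of $\must$ by the definition of $\must$-guided refinement, while every unrefined $\P$-cell contained in $\pst{}{\must}$ already lies in $\npst{}{\must}{\P}\subseteq\must$ by W2. These two cases give $\npst{}{\must}{\P'}\subseteq\must$ directly, which is all that minimality of $\must'$ requires; you neither need nor have agreement of the two operators.

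The $\may$ half of your argument is also logically reversed. Showing that $\may$ is a pre-fixed point of $\wpst{}{\cdot}{\P'}$ yields, by minimality of $\may'$, the inclusion $\may'\subseteq\may$ --- but that is exactly the direction already supplied for free by Proposition~\ref{prop:complete}(ii). The direction that needs work is $\may\subseteq\may'$, for which you must show $\wpst{}{\may'}{\P}\subseteq\may'$ and then use minimality of $\may$: an unrefined $\P$-cell meeting $\pst{}{\may'}$ is a $\P'$-cell meeting $\pst{}{\may'}$ and hence lies in $\may'$ by S2 for $\P'$, while a refined cell is contained in $\must\subseteq\must'\subseteq\may'$ (the last inclusion by comparing the two fixed-point iterations via Proposition~\ref{prop:complete}(i), as you correctly anticipate in your point (b)). With these repairs the skeleton you propose does yield the proposition.
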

By above proposition, a $\must$-guided refinement provides no help in generating better safety proofs. However, from Proposition~\ref{prop:complete}, a finer partition $\P$ increase the precision of $\npst{}{C}{\P}$ and $\wpst{}{C}{\P}$.
Since the rules S5-S6 and W5-W6 involve computing  $\npst{}{C}{\P}$ and $\wpst{}{C}{\P}$
for cells in $\may$ and $\must$ respectively, a $\must$-guided refinement possibly increases the precision of these rules.
Based on the above observations, we can adopt to the following heuristics for refinement:
If the $\must$ set is close to the unsafe set, perform $\X\backslash\must$-guided refinement, otherwise perform $\must$-guided refinement.
%As long as the refinement policy satisfies a fairness condition, 
%namely the parition can be made arbitrarily fine,
%the relative completeness in Theorem~\ref{thm:main} holds.

%% I suggest we drop the proof of this.
%%
%
%\begin{proof}
%We prove the first part: $\must=\must'$.\\
%Since $\P'$ subsumes $\P$, fromProposition~\ref{prop:complete},  for any set $S\subseteq\X$,
%we have $\npst{}{S}\P\subseteq \npst{}{S}{\P'}$. 
%That is, $\npst{}{\cdot}{\P}$ is upper bounded by $\npst{}{\cdot}{\P'}$. 
%Therefore $\must \subseteq \must'$.
%\\
%We prove $\must'\subseteq \must$ by contradiction.
%Suppose on the contrary $\must'\backslash\must \neq \emptyset$. 
%Since $\must'$ is the fixed point of $\npst{}{\cdot}{\P'}$ starting from $\init\subseteq\must$,
%there exists some cell $C'\in \P'$ such that 
%(i) $C'\subseteq \must'\backslash\must$, and 
%(ii) $C'$ is added to $\must'$ by computing $post$ of $\must$, i.e. $C'\subseteq\npst{}{\must}{\P'}$.  
%From~\eqref{eq:npst}, $C'\subseteq \pst{}{\must}$.
%Since the partition $\P'$ and $\P$ have identical cells that are not in $\must$,
%then $C'\in \P$. Again from~\eqref{eq:npst}, we have $C'\subseteq \npst{}{\must}{\P}$.
%However, $C'\not\subseteq \must$, which contradicts to that $\must$ is the fixed point such that
%$\npst{}{\must}{\P}\subseteq \must$.
%Therefore $\must=\must'$.
%
%We omit the proof of the second half $\may = \may'$ since the steps are similar.
%\end{proof}
%
%
%

%!TEX root = 0main.tex

\section{Prototype Implementation and Experiments} \label{sec:case}

We implemented the synthesis algorithm in Python using the the CVC4 SMT solver~\cite{barrett2011cvc4}.
In this section, we briefly report preliminary results on applying it to  a simple class of navigation problems. With this implementation, we were able to automatically synthesize correct controls (and their inductive proofs) for some configurations and proved impossibility  for others.

\paragraph*{Vehicle Navigation Problem}

We consider a reach-avoid problem for a vehicle that follows piecewise linear approximation of Dubin's dynamics. The system model has $4$ state variables $[x,y,v,\theta]^T$: position, velocity and heading angle of the vehicle.
It has input variables $[\alpha,\beta]^T$: the acceleration and the turning rate.
From the  continuous Dubin's vehicle model:
$
\dot x =  v\cos\theta,\
\dot y  =  v\sin\theta,\
\dot v = \alpha,\
\dot \theta = v\beta.  
$
we construct a switched linear model by partitioning the domain of $\theta$ and $v$
into $24$ locations, and for each location we compute an approximate linear dynamics. 
The result is a switched-linear model:
\begin{equation}
\label{eq:vehicledym}
x^+ = x + av + b, \ y^+ = y+ cv + d, \ v^+ = v+\alpha, \ \theta^+ = \theta + e\beta,
\end{equation}
where $a,b,c,d,e$ have different values in different locations.
The piecewise linearized model preserves some properties
of the original system. For example, the linearized model cannot turn in place: if the velocity is close to 0, the heading $\theta$ cannot change.
Moreover, the velocity is non-negative, which further restricts its maneuverability.
These properties give rise to interesting $\synprob{}$ problem instances where the system has no satisfying control law.

We allow  finitely many discrete input values and 
%This is a relatively strong simplification, we plan to incorperate techniques (for example in~\cite{majumdar2012approximately}) to construct equivalent classes in the input space.
%With a discrete input space, we can 
compute $\npst{}{C}{\P}$  and $\wpst{}{C}{\P}$ offline as follows:
%Since the dynamics of the model is invariant with shift in $x,y$ coordinates, we only need to  
For a given partition $\P$, and a cell $C\in\P$, we first identify a set of cells $\N(C)$ such that $C'\in \N(C)$ can visit some state in $C$ in one step.
Then, for each possible input $\vu$, we compute the one step reach set of $\pst{}{\N(C)}$ with the help from reachability tools such as~\cite{frehse2011spaceex,duggiralac2e2}.
Thus we just need to identify the input combinations such that $C$ is covered by or intersected with $\pst{}{\N(C)}$.

%\subsection{Computation of $\npst{}{\cdot}{\P}$ and $\wpst{}{\cdot}{\P}$}

\paragraph*{Experimental Results}

We performed several experiments for the above class of problems using our prototype implementation. We search for a control policy as a look-up table, specified by a $\CPar{}$.
We utilize a $\CPar{}$ $\C$ with 768 cells in total. 
In Figure~\ref{fig:sat} and~\ref{fig:unsat}, the grids illustrate the projection of \CPar{} to $x,y$ coordinates.
%The $\init,\goal$ and obstacles are illustrated as green, blue and red blocks respectively.

We create a partition $\P$ by further partitioning each cell in $\C$ into 4 pieces, with which we construct both the weakened and the strengthened rules.
For some cases, we proved the impossibility of synthesis. We visualize such a case in Figure~\ref{fig:unsat}. 
While for other cases, we successfully synthesized a control policy. An example is illustrated in Figure~\ref{fig:sat}. The satisfying control policy is synthesized with an inductive proof, namely the $\may$ set and the ranking function $\V$. 

\final{In the constraints of this synthesis problem, there are 768 real-valued variables for control input in each cell, 3072 integer variables for values of the ranking function for each partition and 3072 boolean variables indicates whether a partition is reached. 
The weakened or strengthened inductive rules are encoded in roughly 7000 constraints.
The constraints are solved by CVC4~\cite{barrett2011cvc4} in 10 minutes.
}

\begin{figure}
\centering
\includegraphics[width=\textwidth]{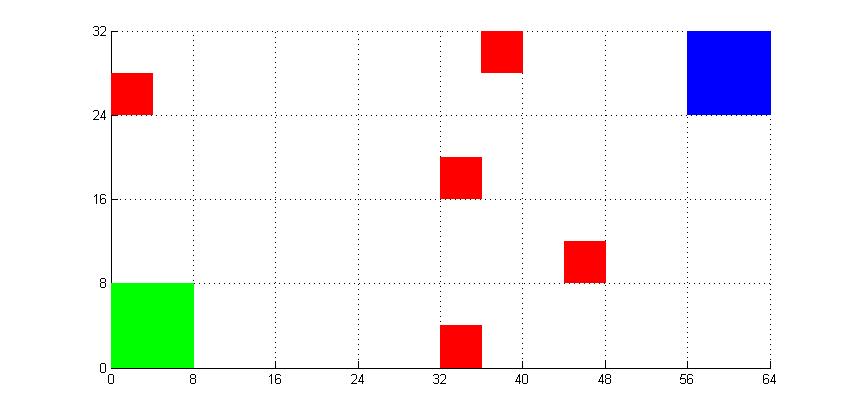}
\caption{A $\synprob{}$ instance that is impossible to solve. The grid illustrates the \CPar{}, the green block at the bottom left  corner is $\init$, the blue rectangle at the  top right is $\goal$, the smaller red blocks are unsafe. 
}
\label{fig:unsat}
\end{figure}
\begin{figure}

\centering
\includegraphics[width=\textwidth]{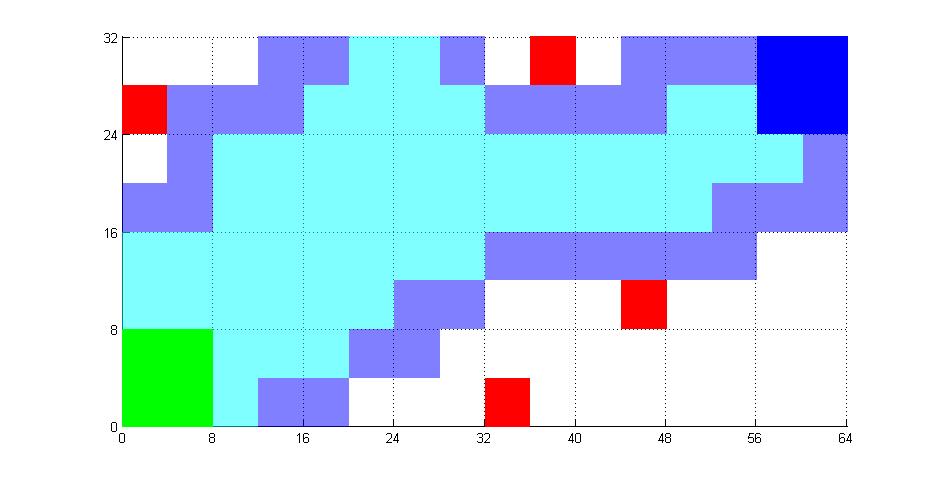}
\caption{A $\synprob{}$ instance that has a satisfying control law. 
The lighter connected region is the $\must$ set and the darker region together with the lighter region is the $\may$ set. 
}
\label{fig:sat}
\end{figure}

% !TEX root = 0main.tex
\section{Conclusion} 
\label{sec:conc}

In this work, we studied the controller synthesis problem of discrete-time systems with possibly unbounded time safety and progress specifications. 
Leveraging the growing strength of modern  SMT tools, we propose an algorithm that finds controllers as well as inductive proofs of their correctness.  Specifically, the algorithm creates a weaker and a stronger version of the synthesis problem and encodes them as SMT problems. By solving the controller synthesis problems for these two bounding systems automatically with SMT solvers, we can solve the synthesis problem for the original system. We prove that this algorithm is sound and relatively complete and show that the solution given by the strengthened system provide a guidance for refining the bounding system.
Our experimental results based on a prototype implementation suggest that this can be a promising direction of investigation for controller synthesis research. 

Since the core problem of computing over-approximations of $post$ are decoupled from synthesis in this formulation, one future direction of research that this work opens up is to  extend this framework to nonlinear system models. 
\final{The performance of the algorithm depends on the templates of the control, ranking function and invariants. 
Thus, to explore different classes of templates and study their performance in our synthesis framework is also a natural next step.}

\end{document}